\providecommand{\tabularnewline}{\\}
\newcommand{\qPr}{\mathsf{qPr}}
\begin{document}

\title{Contextuality-by-Default 2.0: Systems with Binary Random Variables}

\titlerunning{Contextuality-by-Default 2.0}

\author{Ehtibar N. Dzhafarov\textsuperscript{1}, Janne V. Kujala\textsuperscript{2}}

\authorrunning{E. N. Dzhafarov, J. V. Kujala }

\institute{\textsuperscript{1}Purdue University\\
 ehtibar@purdue.edu\\
 $\,$\\
 \textsuperscript{2}University of Jyväskylä\\
 jvk@iki.fi}

\toctitle{Contextuality-by-Default Theory, Version 2.0}

\tocauthor{E. N. Dzhafarov, J. V. Kujala}
\maketitle
\begin{abstract}
The paper outlines a new development in the Contextuality-by-Default
theory as applied to finite systems of binary random variables. The
logic and principles of the original theory remain unchanged, but
the definition of contextuality of a system of random variables is
now based on multimaximal rather than maximal couplings of the variables
that measure the same property in different contexts: a system is
considered noncontextual if these multimaximal couplings are compatible
with the distributions of the random variables sharing contexts. A
multimaximal coupling is one that is a maximal coupling of any subset
(equivalently, of any pair) of the random variables being coupled.
Arguments are presented for why this modified theory is a superior
generalization of the traditional understanding of contextuality in
quantum mechanics. The modified theory coincides with the previous
version in the important case of cyclic systems, which include the
systems whose contextuality was most intensively studied in quantum
physics and behavioral sciences.

\keywords{contextuality, connection, consistent connectedness, cyclic
system, inconsistent connectedness, maximal coupling, multimaximal
coupling.} 
\end{abstract}

\section{Introduction: From maximality to multimaximality}

The Contextuality-by-Default (CbD) theory \cite{bookKD,conversations,DK2014Advances,DK2014LNCSQualified,DK2014Scripta,DKC_LNCS2016,DKL2015FooP,KDL2015PRL,deBarros,KDproof2016,DK_CCsystems}
was proposed as a generalization of the traditional contextuality
analysis in quantum physics \cite{Kurzynski2012,Kochen-Specker1967,Bell1964,Bell1966,15Fine,11Leggett,9CHSH,SuppesZanotti1981}.
The latter has been largely confined to \emph{consistently connected}
systems of random variables, those adhering to the \emph{``no-disturbance}''
principle \cite{Kurzynski2014,Ramanathan2012}: the distributions
of measurement outcomes remain unchanged under different measurement
conditions (\emph{contexts}). CbD allows for \emph{inconsistently
connected} systems, those in which context may influence the distribution
of measurement outcomes for one and the same property \cite{bookKD,conversations,DKC_LNCS2016,DKL2015FooP,KDL2015PRL,DKCZJ_2016,DZK_2015}.
In accordance with the CbD interpretation of the traditional contextuality
analysis, this generalization is achieved by replacing the \emph{identity
couplings} used in dealing with consistently connected systems by
\emph{maximal couplings}. 

Recall that, given a set of random variables $X,Y,\ldots,Z$, a coupling
of this set is any set of jointly distributed random variables, $\left(X',Y',\ldots,Z'\right)$,
with 
\[
X\sim X',\ Y\sim Y',\ \ldots,\ Z\sim Z',
\]
where $\sim$ stands for ``has the same distribution as.'' The coupling
$\left(X',Y',\ldots,Z'\right)$ is maximal if (using $\Pr$ as a symbol
for probability) the value of 
\[
p_{eq}=\Pr\left[X'=Y'=\ldots=Z'\right]
\]
is maximal possible among all possible couplings of $X,Y,\ldots,Z$.
The identity coupling is a special case of a maximal coupling, when
$p_{eq}=1$. The latter is possible if and only if all random variables
$X,Y,\ldots,Z$ (hence also $X',Y',\ldots,Z'$) are identically distributed:
\[
X\sim Y\sim\ldots\sim Z.
\]

The notion of a maximal coupling, however, is not the only possible
generalization of the identity couplings. And it has recently become
apparent that it is not the best possible generalization either. The
maximal-couplings-based definition of \sloppy{(non)contextual} systems
adopted in CbD does not have a certain intuitively plausible property
that is enjoyed by the identity-couplings-based definition of consistently
connected (non)contextual systems. This property is that \emph{any
subsystem of a consistently connected noncontextual system is noncontextual}.
A subsystem is obtained by dropping from a system some of the random
variables. An inconsistently connected noncontextual system in the previously published version of CbD (``CbD 1.0'') does not generally have this property: by dropping some of its
components one may be able to make it contextual.

In the new version, ``CbD 2.0,'' preservation of noncontextuality for subsystems is achieved
by replacing the notion of a maximal coupling in the definition of
(non)contextual systems by the notion of a \emph{multimaximal coupling}.
This term designates a coupling every subcoupling whereof is a maximal
coupling for the corresponding subset of the random variables being
coupled (see Definition \ref{def: multimax} below). 

The remainder of the paper is a systematic presentation of this idea
and of how it works in the analysis of contextuality. CbD 1.0 and
CbD 2.0 coincide when dealing with consistently connected systems
(as they must, because they both generalize this special case). They
also coincide when dealing with the important class of cyclic systems
\cite{DKL2015FooP,KDL2015PRL,KDproof2016} (see Section \ref{sec: Properties-of-contextuality}).
None of the principles upon which CbD is based changes in version
2.0 (Section \ref{sec: Contextuality-by-Default-theory:}). The recently
proposed logic of constructing a universal measure of contextuality
\cite{DK_CCsystems} also transfers to version 2.0 without changes
(Section \ref{sec: A-measure-of}).

\section{\label{sec: Contextuality-by-Default-theory:}Contextuality-by-Default
theory: Basics }

We briefly recapitulate here the main aspects of the Contextuality-by-Default
theory. We recommend, however, that the reader look through some of
the recent accounts of CbD 1.0, e.g., Refs. \cite{DKC_LNCS2016,conversations},
or (especially) Ref. \cite{DK_CCsystems}. 

Each random variable in CbD is double-indexed, $R_{q}^{c}$, where
$q$ is referred to as the \emph{content} of the random variables,
that which $R_{q}^{c}$ measures or responds to, and $c$ is referred
to as its \emph{context}, the conditions under which $R_{q}^{c}$
measures or responds to $q$. 
\begin{remark}
Following Ref. \cite{DK_CCsystems} we will write ``conteXt'' and
``conteNt'' to prevent their confusion in reading. The conteXt and
conteNt of a random variable uniquely identify it within a given system
of random variables. 
\end{remark}

Two random variables $R_{q}^{c}$ and $R_{q'}^{c'}$ are jointly distributed
if and only if they share a conteXt: $c=c'$. Otherwise they are \emph{stochastically
unrelated}. All random variables sharing a conteXt form a jointly
distributed $bunch$ of random variables. All random variables sharing
a conteNt form a \emph{connection}, the elements of which are pairwise
stochastically unrelated. It is necessary that all random variables
in a connection have the same set of possible values (more generally,
the same set and sigma-algebra). 

The present paper is primarily about systems in which all random variables
are binary. It is immaterial for contextuality analysis how these
values are named, insofar as they are identically named and identically
interpreted within each connection. For instance, if $R_{q}^{c}=1$
means ``spin-up along axis $z$ in particle 1'' and $R_{q}^{c}=2$
means ``spin-down along axis $z$ in particle 1,'' then all random
variables $R_{q}^{c'}$ ($c'\not=c$) should have the same possible
values, 1 and 2, with the same meanings. Note that for another conteNt
$q'$, the values of $R_{q'}^{c}$ need not be denoted in the same
way even if they have analogous interpretations: e.g., we may have
$R_{q'}^{c}=3=$ ``spin-up along axis $z$ in particle 2'' and $R_{q'}^{c}=4=$
``spin-down along axis $z$ in particle 2''.

The matrix below provides an example of a \emph{conteXt-conteNt system}
(c-c system) of random variables:

\begin{center}
\begin{tabular}{|c|c|c|c|c}
\cline{1-4} 
$R_{1}^{1}$$\begin{array}{cc}
\\
\\
\end{array}$ & $R_{2}^{1}$$\begin{array}{cc}
\\
\\
\end{array}$ & $\cdot$$\begin{array}{cc}
\\
\\
\end{array}$ & $R_{4}^{1}$$\begin{array}{cc}
\\
\\
\end{array}$ & $c_{1}$\tabularnewline
\cline{1-4} 
$R_{1}^{2}$$\begin{array}{cc}
\\
\\
\end{array}$ & $\cdot$$\begin{array}{cc}
\\
\\
\end{array}$ & $R_{3}^{2}$$\begin{array}{cc}
\\
\\
\end{array}$ & $\cdot$$\begin{array}{cc}
\\
\\
\end{array}$ & $c_{2}$\tabularnewline
\cline{1-4} 
$R_{1}^{3}$$\begin{array}{cc}
\\
\\
\end{array}$ & $R_{2}^{3}$$\begin{array}{cc}
\\
\\
\end{array}$ & $R_{3}^{3}$$\begin{array}{cc}
\\
\\
\end{array}$ & $R_{4}^{3}$$\begin{array}{cc}
\\
\\
\end{array}$ & $c_{3}$\tabularnewline
\cline{1-4} 
\multicolumn{1}{c}{$q_{1}$} & \multicolumn{1}{c}{$q_{2}$} & \multicolumn{1}{c}{$q_{3}$} & \multicolumn{1}{c}{$q_{4}$} & $\boxed{\boxed{\mathcal{R}_{ex}}}$\tabularnewline
\end{tabular}.
\par\end{center}

\noindent Each row here is a bunch of jointly distributed random
variables, each column is a connection (``between bunches''). Note
that not every conteNt should be measured in a given conteXt. 

The system $\mathcal{R}_{ex}$ can be conveniently used to illustrate
the logic of contextuality analysis. We first consider the connections
separately, and for each of them find all couplings that satisfy a
certain property $\mathsf{C}$. Let's call them \emph{$\mathsf{C}$-couplings}.
Then we determine if these $\mathsf{C}$-couplings are \emph{compatible}
with a coupling of the bunches of the c-c system (equivalently put,
with a coupling of the entire c-c system). 

The compatibility in question means the following. A coupling of (the
bunches of) the c-c system is a set of jointly distributed random
variables

\begin{center}
\begin{tabular}{|c|c|c|c|c}
\cline{1-4} 
$S_{1}^{1}$$\begin{array}{cc}
\\
\\
\end{array}$ & $S_{2}^{1}$$\begin{array}{cc}
\\
\\
\end{array}$ & $\begin{array}{cc}
\\
\\
\end{array}$ & $S_{4}^{1}$$\begin{array}{cc}
\\
\\
\end{array}$ & $c_{1}$\tabularnewline
\cline{1-4} 
$S_{1}^{2}$$\begin{array}{cc}
\\
\\
\end{array}$ & $\cdot$$\begin{array}{cc}
\\
\\
\end{array}$ & $S_{3}^{2}$$\begin{array}{cc}
\\
\\
\end{array}$ & $\cdot$$\begin{array}{cc}
\\
\\
\end{array}$ & $c_{2}$\tabularnewline
\cline{1-4} 
$S_{1}^{3}$$\begin{array}{cc}
\\
\\
\end{array}$ & $S_{2}^{3}$$\begin{array}{cc}
\\
\\
\end{array}$ & $S_{3}^{3}$$\begin{array}{cc}
\\
\\
\end{array}$ & $S_{4}^{3}$$\begin{array}{cc}
\\
\\
\end{array}$ & $c_{3}$\tabularnewline
\cline{1-4} 
\multicolumn{1}{c}{$q_{1}$} & \multicolumn{1}{c}{$q_{2}$} & \multicolumn{1}{c}{$q_{3}$} & \multicolumn{1}{c}{$q_{4}$} & $\boxed{\boxed{S_{ex}}}$\tabularnewline
\end{tabular},
\par\end{center}

\noindent such that
\[
\begin{array}{c}
\left(S_{1}^{1},S_{2}^{1},S_{4}^{1}\right)\sim\left(R_{1}^{1},R_{2}^{1},R_{4}^{1}\right),\\
\\
\left(S_{1}^{2},S_{3}^{2}\right)\sim\left(R_{1}^{2},R_{3}^{2}\right),\\
\\
\left(S_{1}^{3},S_{2}^{3},S_{3}^{3},S_{4}^{3}\right)\sim\left(R_{1}^{3},R_{2}^{3},R_{3}^{3},R_{4}^{3}\right).
\end{array}
\]
Since the elements of $S_{ex}$ are jointly distributed, the marginal
distributions of the columns corresponding to the connections of $\mathcal{R}_{ex}$
are well-defined:
\[
\begin{array}{ccc}
\left(S_{1}^{1},S_{1}^{2},,S_{1}^{3}\right) & \textnormal{is a coupling of connection} & R_{1}^{1},R_{1}^{2},R_{1}^{3},\\
\\
\left(S_{2}^{1},S_{2}^{3}\right) & \textnormal{is a coupling of connection} & R_{2}^{1},R_{2}^{3},\\
\\
\left(S_{3}^{2},S_{3}^{3}\right) & \textnormal{is a coupling of connection} & R_{3}^{2},R_{3}^{3},\\
\\
\left(S_{4}^{1},S_{4}^{3}\right) & \textnormal{is a coupling of connection} & R_{4}^{1},R_{4}^{3}.
\end{array}
\]
In CbD we pose the following question: is there a coupling $S_{ex}$
such that the subcouplings corresponding to the connections are\emph{
$\mathsf{C}$-}couplings? If the answer is affirmative, then we say
that the bunches of $\mathcal{R}_{ex}$ are compatible with at least
some of the combinations of the \emph{$\mathsf{C}$-}couplings for
its connections \textemdash{} and the c-c system is considered \emph{partially}
\emph{$\mathsf{C}$-noncontextual}. Otherwise, if no such a coupling
$S_{ex}$ exists, we say that the bunches of $\mathcal{R}_{ex}$ are
incompatible with any of the \emph{$\mathsf{C}$-}couplings for its
connections \textemdash{} and the c-c system is considered \emph{completely}
\emph{$\mathsf{C}$-contextual}. The intuition is that in a completely
\emph{$\mathsf{C}$-}contextual c-c system the conteXts ``interfere''
with one's ability to couple the measurements of each conteNt in a
specified (by $\mathsf{C}$) way \textemdash{} while the connections
can be coupled in this way if they are considered separately, ignoring
the conteXts.

The adjectives ``partially'' and ``completely'' do not belong
to the original theory. They are added here because one can also consider
a stronger (more restrictive) notion of noncontextual c-c systems
and, correspondingly, a weaker (less restrictive) notion of contextual
c-c systems. We say that a c-c system is \emph{completely $\mathsf{C}$-noncontextual
}if the bunches of $\mathcal{R}_{ex}$ are compatible with any combinations
of the\emph{ $\mathsf{C}$-}couplings for its connections; and it
is \emph{partially $\mathsf{C}$-contextual} if the bunches of $\mathcal{R}_{ex}$
are incompatible with at least some of these combinations. The intuition
is that in a completely \emph{$\mathsf{C}$-}noncontextual c-c system
the conteXts ``do not interfere'' in any way with \emph{$\mathsf{C}$-}couplings
of the measurements of any given conteNt (as if the connections were
taken separately, ignoring conteXts). 

In CbD 1.0 the $\mathsf{C}$-couplings are maximal couplings, as defined
in the opening paragraph of the paper. In CbD 2.0 $\mathsf{C}$-couplings
are multimaximal couplings, as defined below. We will see that if
all random variables in a system are binary and $\mathsf{C}$ is multimaximality,
then every connection has a unique $\mathsf{C}$-coupling (Theorem
\ref{thm: multimaximal-coupling}-Corollary \ref{cor: A-multimaximal-coupling}).
In this case the notions of partial and complete (non)contextuality
coincide, allowing us to drop these adjectives when speaking of (non)contextual
c-c systems.
\begin{remark}
It is important to accept that noncontextuality of a c-c system (even
if complete) does not mean that the conteXts are irrelevant and can
be ignored. On the contrary, they are relevant ``by defaults because,
e.g., $R_{2}^{1}$ and $R_{2}^{3}$ in the second connection of $\mathcal{R}_{ex}$
are distinct and stochastically unrelated random variables. Moreover,
the distributions of $R_{2}^{1}$ and $R_{2}^{3}$ may very well be
different (i.e., the c-c system may be inconsistently connected),
and this does not necessarily mean that the system is contextual (even
if only partially) in the sense of our definitions. The measurements
of the conteNt $q_{2}$ in conteXt $c_{3}$ can be ``directly''
influenced by the jointly-made measurements of $q_{3}$ (in which
case we can speak of ``signaling'' or ``disturbance''), while
in context $c_{1}$ this influence is absent \cite{bacciagaluppi,Kofler_2013}.
It is also possible that the experimental set-up in context $c_{3}$
is different from that in context $c_{1}$, in which case we can speak
of conteXt-dependent biases \cite{Nature_2011,Nature_2011_companion}.
All of this may account for the different distributions of $R_{2}^{1}$
and $R_{2}^{3}$, and none of this by itself makes the system contextual.
See Refs. \cite{conversations,DK_CCsystems,DKCZJ_2016} for argumentation
against confusing signaling and contextual biases with contextuality.
(Of course, if one so wishes, they can be called forms of contextuality,
but in a different sense from how contextulaity is understood in quantum
physics and in CbD.)
\end{remark}

\section{\label{sec: Multimaximal-couplings-for}Multimaximal couplings for
binary variables}
\begin{definition}
\label{def: multimax}\emph{Let $R_{q}^{1},\ldots,R_{q}^{k}$ ($k>1$)
be a connection of a system. A coupling $\left(T_{q}^{1},\ldots,T_{q}^{k}\right)$
of $R_{q}^{1},\ldots,R_{q}^{k}$ is a }multimaximal coupling\emph{
if, for any $m>1$ and any subset $\left(T_{q}^{i_{1}},\ldots,T_{q}^{i_{m}}\right)$
of $\left(T_{q}^{1},\ldots,T_{q}^{k}\right)$, the value of 
\[
\Pr\left[T_{q}^{i_{1}}=\ldots=T_{q}^{i_{m}}\right]
\]
is largest possible among all couplings of $R_{q}^{i_{1}},\ldots,R_{q}^{i_{m}}$.}
\end{definition}

The multimaximality plays the role of the constraint $\mathsf{C}$
in the definition of $\mathsf{C}$-couplings given in the previous
section. One finds multimaximal couplings for each of the connections
and then investigates their compatibility with the c-c system's bunches.

It is known that a maximal coupling exists for any connection \cite{Thor}.
This is not true for multimaximal couplings in general: such a coupling
need not exist if the number of possible values for the random variables
in a connection exceeds 2. 
\begin{example}
\label{exa: 1}Consider a connection consisting of random variables
$R_{q}^{1},R_{q}^{2},R_{q}^{3}$ each having values $1,2,3$ with
the following probabilities

\begin{center}%
\begin{tabular}{c|c|c|c|}
\multicolumn{1}{c}{} & \multicolumn{1}{c}{} & \multicolumn{1}{c}{} & \multicolumn{1}{c}{}\tabularnewline
\multicolumn{1}{c}{} & \multicolumn{1}{c}{$1$$ $} & \multicolumn{1}{c}{$2$$ $} & \multicolumn{1}{c}{$3$$ $}\tabularnewline
\cline{2-4} 
$R_{q}^{1}$$ $ & $0$$\begin{array}{c}
\\
\\
\end{array}$ & $\nicefrac{1}{2}$$\begin{array}{c}
\\
\\
\end{array}$ & $\nicefrac{1}{2}$$\begin{array}{c}
\\
\\
\end{array}$\tabularnewline
\cline{2-4} 
$R_{q}^{2}$$ $ & $\nicefrac{1}{2}$$\begin{array}{c}
\\
\\
\end{array}$ & $0$$\begin{array}{c}
\\
\\
\end{array}$ & $\nicefrac{1}{2}$$\begin{array}{c}
\\
\\
\end{array}$\tabularnewline
\cline{2-4} 
$R_{q}^{3}$$ $ & $\nicefrac{1}{2}$$\begin{array}{c}
\\
\\
\end{array}$ & $\nicefrac{1}{2}$$\begin{array}{c}
\\
\\
\end{array}$ & $0$$\begin{array}{c}
\\
\\
\end{array}$\tabularnewline
\cline{2-4} 
\multicolumn{1}{c}{} & \multicolumn{1}{c}{} & \multicolumn{1}{c}{} & \multicolumn{1}{c}{}\tabularnewline
\end{tabular}$\:$.\end{center}

\noindent If a multimaximal coupling $\left(T_{q}^{1},T_{q}^{2},T_{q}^{3}\right)$
exists, we should have (see Ref. \cite{Thor}, or Theorem 3.3 in Ref.
\cite{DK_CCsystems})

\[
\begin{array}{ccccc}
\Pr[T_{q}^{1}=T_{q}^{2}=1]=0 &  & \Pr[T_{q}^{1}=T_{q}^{2}=2]=0 &  & \Pr[T_{q}^{1}=T_{q}^{2}=3]=0.5\\
\\
\Pr[T_{q}^{2}=T_{q}^{3}=1]=0.5 &  & \Pr[T_{q}^{2}=T_{q}^{3}=2]=0 &  & \Pr[T_{q}^{2}=T_{q}^{3}=3]=0\\
\\
\Pr[T_{q}^{1}=T_{q}^{3}=1]=0 &  & \Pr[T_{q}^{1}=T_{q}^{3}=2]=0.5 &  & \Pr[T_{q}^{1}=T_{q}^{3}=3]=0
\end{array}
\]
from which we have in particular

\[
\Pr[T_{q}^{1}=T_{q}^{2}=3]=\Pr[T_{q}^{1}=T_{q}^{3}=2]=\Pr[T_{q}^{2}=T_{q}^{3}=1]=0.5.
\]
But these three events are pairwise mutually exclusive, so the sum
of their probabilities cannot exceed 1. \hfill$\square$
\end{example}

It can also be shown that, in the case of random variables with more
than two possible values, a multimaximal coupling, if it exists, is
not generally unique.
\begin{example}
\label{exa: 2}Consider a connection consisting of random variables
$R_{q}^{1},R_{q}^{2},R_{q}^{3}$ each having one of six values (denoted
$1,1',2,2',3,3'$) with the following probabilities

\begin{center}%
\begin{tabular}{c|c|c|c|c|c|c|}
\multicolumn{1}{c}{} & \multicolumn{1}{c}{} & \multicolumn{1}{c}{} & \multicolumn{1}{c}{} & \multicolumn{1}{c}{} & \multicolumn{1}{c}{} & \multicolumn{1}{c}{}\tabularnewline
\multicolumn{1}{c}{} & \multicolumn{1}{c}{$1$$ $} & \multicolumn{1}{c}{$1'$} & \multicolumn{1}{c}{$2$$ $} & \multicolumn{1}{c}{$2'$} & \multicolumn{1}{c}{$3$$ $} & \multicolumn{1}{c}{$3'$$ $}\tabularnewline
\cline{2-7} 
$R_{q}^{1}$$ $ & $0$$\begin{array}{c}
\\
\\
\end{array}$ & 0$\begin{array}{c}
\\
\\
\end{array}$ & 0$ $ & $\nicefrac{1}{2}$$ $ & 0$ $ & $\nicefrac{1}{2}$$ $\tabularnewline
\cline{2-7} 
$R_{q}^{2}$$ $ & $0$$\begin{array}{c}
\\
\\
\end{array}$ & $\nicefrac{1}{2}$$\begin{array}{c}
\\
\\
\end{array}$ & $0$$ $ & $0$$ $ & $\nicefrac{1}{2}$$ $ & $0$$ $\tabularnewline
\cline{2-7} 
$R_{q}^{3}$$ $ & $\nicefrac{1}{2}$$\begin{array}{c}
\\
\\
\end{array}$ & 0$\begin{array}{c}
\\
\\
\end{array}$ & $\nicefrac{1}{2}$$ $ & $0$$ $ & $0$$ $ & $0$$ $\tabularnewline
\cline{2-7} 
\multicolumn{1}{c}{} & \multicolumn{1}{c}{} & \multicolumn{1}{c}{} & \multicolumn{1}{c}{} & \multicolumn{1}{c}{} & \multicolumn{1}{c}{} & \multicolumn{1}{c}{}\tabularnewline
\end{tabular}$\:$.\end{center}

\noindent Then the distinct couplings whose distributions are shown
below,

\begin{center}%
\begin{tabular}{cccc}
 &  &  & \tabularnewline
$\left(\dot{T}_{q}^{1},\dot{T}_{q}^{2},\dot{T}_{q}^{3}\right)=$ & $\left(2',1',1\right)$$ $ & $\left(3',3,2\right)$ & otherwise\tabularnewline
\cline{2-4} 
\multicolumn{1}{c|}{prob. mass} & \multicolumn{1}{c|}{$\nicefrac{1}{2}$} & \multicolumn{1}{c|}{$\nicefrac{1}{2}$} & \multicolumn{1}{c|}{0}\tabularnewline
\cline{2-4} 
 &  &  & \tabularnewline
\end{tabular}

and

\begin{tabular}{cccc}
 &  &  & \tabularnewline
$\left(\ddot{T}_{q}^{1},\ddot{T}_{q}^{2},\ddot{T}_{q}^{3}\right)=$ & $\left(2',3,2\right)$$ $ & $\left(3',1',1\right)$ & otherwise\tabularnewline
\cline{2-4} 
\multicolumn{1}{c|}{prob. mass} & \multicolumn{1}{c|}{$\nicefrac{1}{2}$} & \multicolumn{1}{c|}{$\nicefrac{1}{2}$} & \multicolumn{1}{c|}{0}\tabularnewline
\cline{2-4} 
 &  &  & \tabularnewline
\end{tabular}$\:,$\end{center}

are both multumaximal couplings. \hfill$\square$
\end{example}

However, the situation is different if the random variables in a connection
are all binary: multimaximal couplings in this case always exist and
are unique. In the theorem to follow we denote the values of all variables
$R_{q}^{i}$ by $1,2$, and we will write values of $\left(T_{q}^{1},\ldots,T_{q}^{k}\right)$
as strings of 1's and 2's, without commas.
\begin{theorem}
\label{thm: multimaximal-coupling}Let $R_{q}^{1},\ldots,R_{q}^{k}$
be a connection with binary random variables arranged so that the
values of $p_{i}=\Pr\left[R_{q}^{i}=1\right]$ are sorted $p_{1}\leq\ldots\leq p_{k}$.
Then $\left(T_{q}^{1},\ldots,T_{q}^{k}\right)$ is a multimaximal
coupling of $R_{q}^{1},\ldots,R_{q}^{k}$ if and only if all values
of $\left(T_{q}^{1},\ldots,T_{q}^{k}\right)$ are assigned zero probability
mass, except for 
\[
\left[\begin{array}{ccc}
\textnormal{value of }\left(T_{q}^{1},\ldots,T_{q}^{k}\right) & | & \textnormal{probability mass}\\
11\ldots1 & | & p_{1}\\
21\ldots1 & | & p_{2}-p_{1}\\
22\ldots1 & | & p_{3}-p_{2}\\
\vdots & | & \vdots\\
\overset{l}{\overbrace{2\ldots2}}\underset{k-l}{\underbrace{1\ldots1}} & | & p_{l+1}-p_{l}\\
\vdots & | & \vdots\\
22\ldots2 & | & 1-p_{k}
\end{array}\right].\tag{\ensuremath{\dagger}}
\]
\end{theorem}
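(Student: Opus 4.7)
The plan is to prove both directions by exploiting a feature peculiar to binary variables: the event $\{T_q^{i_1}=\ldots=T_q^{i_m}\}$ decomposes into $\{\text{all}=1\}\cup\{\text{all}=2\}$, and each component admits the coupling-independent upper bound $\min_j\Pr[R_q^{i_j}=v]$. Consequently, if the subset is sorted so that $p_{i_{(1)}}\le\ldots\le p_{i_{(m)}}$, the largest achievable value of $\Pr[T_q^{i_1}=\ldots=T_q^{i_m}]$ is $p_{i_{(1)}}+(1-p_{i_{(m)}})=1-(p_{i_{(m)}}-p_{i_{(1)}})$, and a subcoupling is maximal iff it saturates both components. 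My first move will be to note that, applied to pairs, this reduces multimaximality to the pair condition $\Pr[T_q^i=T_q^j=1]=\min(p_i,p_j)$ and $\Pr[T_q^i=T_q^j=2]=\min(1-p_i,1-p_j)$ for every pair $(i,j)$.

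\textbf{Uniqueness (``only if'').} Assuming $(T_q^1,\ldots,T_q^k)$ is multimaximal with $p_1\le\ldots\le p_k$, I would deduce from the pairwise condition that for $i<j$, $\Pr[T_q^i=T_q^j=1]=p_i=\Pr[T_q^i=1]$, hence $\Pr[T_q^i=1,\,T_q^j=2]=0$. This yields the almost-sure nesting $\{T_q^1=1\}\subseteq\{T_q^2=1\}\subseteq\ldots\subseteq\{T_q^k=1\}$, so the only value-strings with positive mass are the ``sorted'' strings $2^l 1^{k-l}$ for $l=0,\ldots,k$. Writing $q_l$ for the mass of such a string and $p_0:=0$, the marginal identities $\sum_{l=0}^{i-1}q_l=p_i$ together with $\sum_{l=0}^k q_l=1$ telescope to $q_l=p_{l+1}-p_l$ for $l<k$ and $q_k=1-p_k$, recovering $(\dagger)$.

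\textbf{Existence (``if'').} For the converse, taking the distribution $(\dagger)$ and any subset $i_1<\ldots<i_m$, I would observe that the positions of the subset in the string $2^l 1^{k-l}$ are all equal to $1$ iff $l<i_1$, so $\Pr[T_q^{i_1}=\ldots=T_q^{i_m}=1]=\sum_{l=0}^{i_1-1}(p_{l+1}-p_l)=p_{i_1}$; symmetrically, $\Pr[T_q^{i_1}=\ldots=T_q^{i_m}=2]=1-p_{i_m}$. Their sum equals the coupling-independent upper bound derived in the first step, so $(\dagger)$ is indeed multimaximal.

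\textbf{The main obstacle} is essentially conceptual: recognising that binarity collapses the hierarchy of subset-conditions into a pairwise one and further into the linear-order nesting of the events $\{T_q^i=1\}$. Once that collapse is in hand, each direction becomes bookkeeping via telescoping sums, with the marginals alone pinning down the coupling. The non-triviality of this collapse is witnessed by Example~\ref{exa: 1}, where with three values a multimaximal coupling can fail to exist altogether.
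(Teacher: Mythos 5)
Your proposal is correct, and its harder half takes a genuinely different route from the paper's. The sufficiency direction is essentially the same in both: from $(\dagger)$, for any subset $i_{1}<\ldots<i_{m}$ one computes $\Pr\left[T_{q}^{i_{1}}=\ldots=T_{q}^{i_{m}}=1\right]=p_{i_{1}}$ and $\Pr\left[T_{q}^{i_{1}}=\ldots=T_{q}^{i_{m}}=2\right]=1-p_{i_{m}}$, each attaining its bound. For necessity, however, the paper argues by induction on the position $l$ of the first $1$ in the value string, using the maximality of the tail subcouplings $\left(T_{q}^{l+1},\ldots,T_{q}^{k}\right)$ at the value $1$ together with single-variable marginals to kill the off-staircase masses step by step; you instead use only pairwise maximality, which for binary variables forces $\Pr\left[T_{q}^{i}=T_{q}^{j}=1\right]=\min(p_{i},p_{j})$ and $\Pr\left[T_{q}^{i}=T_{q}^{j}=2\right]=\min(1-p_{i},1-p_{j})$, hence $\Pr\left[T_{q}^{i}=1,T_{q}^{j}=2\right]=0$ for $i<j$, the almost-sure nesting of the events $\left\{ T_{q}^{i}=1\right\}$, support confined to the staircase strings, and masses pinned by telescoping the marginals. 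Your version isolates exactly where binarity enters and delivers, as a byproduct, the content of Theorem~\ref{thm: pairs binary} (pairwise, indeed adjacent-pair, maximality already forces $(\dagger)$), whereas the paper's induction needs multi-element tail subsets. Two points to make explicit in a full write-up: achievability of the bound $p_{i_{(1)}}+\left(1-p_{i_{(m)}}\right)$, which underlies your ``maximal iff both components saturate'' step (elementary for pairs, and supplied in general by your own $(\dagger)$-construction), and the check that $(\dagger)$ is a bona fide coupling (nonnegative masses summing to one, correct marginals), which the paper does first and which your subset computation with $m=1$ implicitly covers; also note that your opening remark about ``reducing multimaximality to the pair condition'' is not needed in its strong form, since your argument uses only the trivial direction that pairs are subsets, so there is no circularity.
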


\begin{proof}
Note that the distribution of $\left(T_{q}^{1},\ldots,T_{q}^{k}\right)$
in the theorem's statement is well-defined, and that $\left(T_{q}^{1},\ldots,T_{q}^{k}\right)$
is indeed a coupling of $R_{q}^{1},\ldots,R_{q}^{k}$: for any $1\leq l\leq k$,
\[
\Pr\left[T_{q}^{l}=1\right]=\sum_{m=0}^{l-1}\Pr\left[\overset{m}{\overbrace{2\ldots2}}\underset{k-m}{\underbrace{1\ldots1}}\right]=\sum_{m=0}^{l-1}\left(p_{m+1}-p_{m}\right)=p_{l}=\Pr\left[R_{q}^{l}=1\right].
\]

\emph{Sufficiency}. The ``if'' part is checked directly: for any
$1\leq i_{1}<\ldots<i_{m}\leq k$,
\[
\begin{array}{r}
\Pr\left[T_{q}^{i_{1}}=\ldots=T^{i_{m}}=1\right]=\sum_{m=0}^{i_{1}-1}\Pr\left[\overset{m}{\overbrace{2\ldots2}}\underset{k-m}{\underbrace{1\ldots1}}\right]\\
\\
=\sum_{m=0}^{i_{1}-1}\left(p_{m+1}-p_{m}\right)=p_{i_{1}}=\Pr\left[T_{q}^{i_{1}}=1\right],
\end{array}
\]
which is the maximal possible value for the leftmost probability.
Analogously, 
\[
\begin{array}{c}
\Pr\left[T_{q}^{i_{1}}=\ldots=T^{i_{m}}=2\right]=\sum_{m=i_{m}}^{k}\Pr\left[\overset{m}{\overbrace{2\ldots2}}\underset{k-m}{\underbrace{1\ldots1}}\right]\\
\\
=\sum_{m=i_{m}}^{k}\left(p_{m+1}-p_{m}\right)=1-p_{i_{m}}=\Pr\left[T_{q}^{i_{m}}=2\right],
\end{array}
\]
which is also the maximal possible probability. This establishes that
$\left(T_{q}^{i_{1}},\ldots,T^{i_{m}}\right)$ is a maximal coupling
for $\left(R_{q}^{i_{1}},\ldots,R^{i_{m}}\right)$. 

\emph{Necessity.} The ``only if'' part of the statement is proved
by (i) observing that $\Pr\left[22\ldots2\right]=1-p_{k}$, and (ii)
proving that if $l$ is the ordinal position of the first 1 in the
value of $\left(T_{q}^{1},\ldots,T_{q}^{k}\right)$, then

\[
\Pr\left[\overset{l-1}{\overbrace{2\ldots2}}\underset{k-l+1}{\underbrace{1\ldots1}}\right]=p_{l}-p_{l-1},
\]
and for all other strings with the first 1 in the $l$th position
the probabilities are zero. We prove (ii) by induction on $l$. For
$l=1$, we have 
\[
p_{1}=\Pr\left[11\ldots1\right].
\]
Since
\[
p_{1}=\Pr\left[T_{q}^{1}=1\right]=\Pr\left[11\ldots1\right]+\sum\Pr\left[1\underset{\textnormal{not all 1's}}{\underbrace{\ldots}}\right],
\]
all the summands under the summation operator must be zero. Let the
statement be proved up to and including $l<k$. We have 
\[
p_{l+1}=\Pr\left[T_{q}^{l+1}=\ldots=T_{q}^{k}=1\right]=\Pr\left[\overset{l}{\overbrace{2\ldots2}}\underset{k-l}{\underbrace{1\ldots1}}\right]+\sum\Pr\left[\overset{\mbox{not all 2's}}{\overbrace{\ldots}}\underset{k-l}{\underbrace{1\ldots1}}\right].
\]
By the induction hypothesis, all summands under the summation operator
are zero, except for
\[
\left[\begin{array}{ccc}
\textnormal{value of }\left(T_{q}^{1},\ldots,T_{q}^{k}\right) & | & \textnormal{probability mass}\\
11\ldots1 & | & p_{1}\\
21\ldots1 & | & p_{2}-p_{1}\\
22\ldots1 & | & p_{3}-p_{2}\\
\vdots & | & \vdots\\
\overset{l-1}{\overbrace{2\ldots2}}\underset{k-l+1}{\underbrace{1\ldots1}} & | & p_{l}-p_{l-1}
\end{array}\right].
\]
These values sum to $p_{l}$. Hence
\[
\Pr\left[\overset{l}{\overbrace{2\ldots2}}\underset{k-l}{\underbrace{1\ldots1}}\right]=p_{l+1}-p_{l}.
\]
We also have
\[
\begin{array}{r}
p_{l+1}=\Pr\left[T_{q}^{l+1}=1\right]=\Pr\left[\overset{l}{\overbrace{2\ldots2}}\underset{k-l}{\underbrace{1\ldots1}}\right]+\sum\Pr\left[\overset{\mbox{not all 2's}}{\overbrace{\ldots}}\underset{k-l}{\underbrace{1\ldots1}}\right]\\
\\
+\sum\Pr\left[\overset{l+1}{\overbrace{\ldots1}}\underset{\textnormal{not all 1's}}{\underbrace{\ldots}}\right]\\
\\
=\left(p_{l+1}-p_{l}\right)+p_{l}+\sum\Pr\left[\overset{l+1}{\overbrace{\ldots1}}\underset{\textnormal{not all 1's}}{\underbrace{\ldots}}\right]
\end{array}
\]
whence the summands under the last summation operator must all be
zero. \hfill$\square$
\end{proof}

\begin{corollary}
\label{cor: A-multimaximal-coupling}A multimaximal coupling $\left(T_{q}^{1},\ldots,T_{q}^{k}\right)$
exists and is unique for any connection $R_{q}^{1},\ldots,R_{q}^{k}$
with binary random variables.
\end{corollary}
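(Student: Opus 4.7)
The plan is to obtain the corollary as essentially a direct consequence of Theorem \ref{thm: multimaximal-coupling}, with a small reduction step to handle the fact that the theorem was phrased under the sorting assumption $p_1\leq\ldots\leq p_k$ whereas the corollary is stated for an arbitrary connection $R_q^1,\ldots,R_q^k$.

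First I would observe that the property of being a multimaximal coupling is invariant under permutations of the indices: if $\sigma$ is any permutation of $\{1,\ldots,k\}$ and $(T_q^1,\ldots,T_q^k)$ is a multimaximal coupling of $R_q^1,\ldots,R_q^k$, then $(T_q^{\sigma(1)},\ldots,T_q^{\sigma(k)})$ is a multimaximal coupling of $R_q^{\sigma(1)},\ldots,R_q^{\sigma(k)}$. This is immediate from Definition \ref{def: multimax}, since the notion quantifies over \emph{all} subsets. Consequently we may fix a permutation $\sigma$ that sorts the marginals, $p_{\sigma(1)}\leq\ldots\leq p_{\sigma(k)}$, prove existence and uniqueness for the sorted sequence, and then transfer the result back by $\sigma^{-1}$.

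For existence, I would appeal to the sufficiency part of Theorem \ref{thm: multimaximal-coupling}. The distribution $(\dagger)$ is well-defined because each mass $p_{l+1}-p_l$ is nonnegative (by the sorting), the masses telescope to $1$, and the marginal computation already carried out at the start of the theorem's proof shows that the resulting joint has the prescribed univariate distributions; the sufficiency argument then shows this joint is a multimaximal coupling. For uniqueness, I would invoke the necessity part of the same theorem: any multimaximal coupling of the sorted connection must assign the probabilities in $(\dagger)$ to the strings $\overbrace{2\cdots2}^{l}\,\underbrace{1\cdots1}_{k-l}$ and zero to all other $k$-strings in $\{1,2\}^k$, so its distribution is uniquely determined.

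Finally I would pull these two conclusions back along $\sigma^{-1}$ to obtain existence and uniqueness for the original (unsorted) connection $R_q^1,\ldots,R_q^k$. There is essentially no obstacle here: the theorem already does all the substantive work, and the only thing to be careful about is spelling out that the sorting is a harmless reindexing. If I wanted to be fully explicit, I would note that the resulting unique distribution on the original indices is simply $(\dagger)$ with the positions of $1$'s and $2$'s permuted by $\sigma^{-1}$, which makes clear that different tie-breaking permutations among equal $p_i$'s give rise to the same joint distribution, so the coupling is genuinely unique and not merely unique up to a choice of $\sigma$.
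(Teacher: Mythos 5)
Your proposal is correct and follows essentially the same route as the paper: the corollary is an immediate consequence of Theorem \ref{thm: multimaximal-coupling}, whose sufficiency part gives existence of the coupling with distribution $(\dagger)$ and whose necessity part gives uniqueness. Your extra remark that the sorting is just a harmless reindexing (and that ties among equal $p_i$'s do not affect the resulting joint distribution) is a sensible, if routine, piece of added explicitness.
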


The significance of this result is that insofar as we confine our
analysis to c-c systems of binary random variables, every bunch (a
row in a c-c matrix) has a known distribution and every connection
(a column in the c-c matrix) has a uniquely imposed on it distribution.
The only question is whether the distributions along the rows and
along the columns of a c-c matrix are mutually compatible, i.e., can
be viewed as marginals of an overall coupling of the entire c-c system. 

We can now formulate the CbD 2.0 definition of (non)contextuality
in systems with binary random variables.
\begin{definition}
\label{def: multimaximal binary}\emph{A coupling of a c-c system
is called }multimaximally connected\emph{ if every subcoupling of
this coupling corresponding to a connection of the system is a multimaximal
coupling of this connection.}
\end{definition}

\begin{definition}
\label{def: (non)contextual binary}\emph{A c-c system of binary random
variables is }noncontextual\emph{ if it has a multimaximally connected
coupling. Otherwise it is }contextual\emph{.}
\end{definition}

\begin{remark}
As explained in the next section, any (non)contextual system of binary
random variables is completely (non)contextual. Because of this it
is unnecessary to use the qualification ``completely'' in the definition
above. Note that this definition applies only to systems of binary
random variables. The extension of this definition to arbitrary random
variables is not unique, and we leave this topic outside the scope
of this paper (but will discuss it briefly in Section \ref{sec: Conclusion:-How-to}).
\end{remark}

\section{\label{sec: Properties-of-contextuality}Properties of contextuality}

Contextuality analysis of the systems of binary random variables is
simplified by the following theorem, proved in Ref. \cite{DK_FdP}.
\begin{theorem}
\label{thm: pairs binary}Let $R_{q}^{1},\ldots,R_{q}^{k}$ be a connection
with binary random variables arranged so that the values of $p_{i}=\Pr\left[R_{q}^{i}=1\right]$
are sorted $p_{1}\leq\ldots\leq p_{k}$. Then $\left(T_{q}^{1},\ldots,T_{q}^{k}\right)$
is a multimaximal coupling of $R_{q}^{1},\ldots,R_{q}^{k}$ if and
only if $\left(T_{q}^{i},T_{q}^{i+1}\right)$ is a maximal coupling
of $\left\{ R_{q}^{i},R_{q}^{i+1}\right\} $ for $i=1,\ldots,k-1$.
\end{theorem}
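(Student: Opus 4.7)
The plan is to use Theorem \ref{thm: multimaximal-coupling} as a crutch and reduce the ``if'' direction to showing that adjacent-pair maximality forces the joint distribution to take the canonical form $(\dagger)$. The ``only if'' direction is immediate: by Definition \ref{def: multimax}, for any multimaximal coupling every two-element subset is itself a maximal coupling of the corresponding pair of variables, so in particular each adjacent pair $(T_q^{i},T_q^{i+1})$ is maximal.

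For the converse, the key lemma I would isolate is the following observation about binary pairs: if $X,Y$ are binary with $a=\Pr[X=1]\leq\Pr[Y=1]=b$, then a maximal coupling satisfies $\Pr[X=Y=1]=a$ and $\Pr[X=Y=2]=1-b$, since each probability is separately capped by the corresponding marginal minimum. Combined with the marginal identity $\Pr[X=1]=\Pr[X=1,Y=1]+\Pr[X=1,Y=2]$, this forces $\Pr[X=1,Y=2]=0$; equivalently, $\{X=1\}\subseteq\{Y=1\}$ almost surely. Applying this fact to every adjacent pair $(T_q^{i},T_q^{i+1})$ yields the chain of almost-sure inclusions
\[
\{T_q^1=1\}\subseteq\{T_q^2=1\}\subseteq\ldots\subseteq\{T_q^k=1\}.
\]

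From this nesting, the only values of $(T_q^1,\ldots,T_q^k)$ carrying positive mass are the monotone strings $2^{l}1^{k-l}$ for $l=0,1,\ldots,k$. The probability of the string with exactly $l$ leading $2$'s is $\Pr[T_q^{l}=2,T_q^{l+1}=1]=\Pr[T_q^{l}=2]-\Pr[T_q^{l+1}=2]=p_{l+1}-p_{l}$ (with the boundary cases $p_{1}$ and $1-p_{k}$), which is exactly the distribution $(\dagger)$. An application of Theorem \ref{thm: multimaximal-coupling} then certifies that $(T_q^1,\ldots,T_q^k)$ is multimaximal.

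The only non-mechanical step is extracting the almost-sure inclusion from maximality of a binary pair, which is the main obstacle only in the sense of needing to be stated cleanly; everything downstream is bookkeeping with marginals. One could alternatively bypass Theorem \ref{thm: multimaximal-coupling} altogether and argue directly: for any subset $i_{1}<\ldots<i_{m}$ the nesting gives $\Pr[T_q^{i_1}=\ldots=T_q^{i_m}=1]=p_{i_1}$ and $\Pr[T_q^{i_1}=\ldots=T_q^{i_m}=2]=1-p_{i_m}$, whose sum $p_{i_1}+(1-p_{i_m})$ already saturates the generic upper bound $\min_{j}\Pr[T_q^{i_j}=1]+(1-\max_{j}\Pr[T_q^{i_j}=1])$ on $\Pr[T_q^{i_1}=\ldots=T_q^{i_m}]$ for any coupling.
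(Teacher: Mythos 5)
Your proof is correct. Note that the paper itself does not prove Theorem~\ref{thm: pairs binary} at all: it is stated with a pointer to Ref.~\cite{DK_FdP}, so there is no in-paper argument to match yours against; what you supply is a self-contained derivation from material that \emph{is} in the paper. Your ``only if'' direction is the same triviality anyone would use (pairs are subsets in Definition~\ref{def: multimax}). For the ``if'' direction, your key lemma --- that maximality of a binary pair forces $\Pr[X=1,Y=2]=0$, hence the almost-sure nesting $\{T_q^1=1\}\subseteq\ldots\subseteq\{T_q^k=1\}$ --- is exactly the right structural fact; it confines the support to the monotone strings $2^{l}1^{k-l}$, and your marginal bookkeeping recovers the distribution $(\dagger)$, after which the sufficiency half of Theorem~\ref{thm: multimaximal-coupling} finishes the job. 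The only point you gloss over is that your pair lemma tacitly uses attainability of the bound $\min(a,b)+\min(1-a,1-b)$ for $\Pr[X=Y]$ (otherwise ``maximal'' need not saturate both caps); for binary variables this is immediate by an explicit coupling, and it is the standard fact the paper itself invokes via Thorisson, so this is a matter of stating one line, not a gap. Your closing alternative --- computing $\Pr[\textnormal{all }=1]=p_{i_1}$ and $\Pr[\textnormal{all }=2]=1-p_{i_m}$ for an arbitrary subset directly from the nesting and noting that their sum meets the universal cap --- is arguably the cleaner route, since it verifies Definition~\ref{def: multimax} directly and makes the appeal to Theorem~\ref{thm: multimaximal-coupling} unnecessary; what the $(\dagger)$-based route buys instead is the uniqueness statement for free.
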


In other words, in the case of binary random variables, multimaximality
can be defined in terms of certain pairs of random variables rather
than all possible subsets thereof, as it was done in Definition \ref{def: multimax}.
As shown in Section \ref{sec: Conclusion:-How-to} below, a pairwise
formulation can also be used in the general case, for arbitrary random
variables. 

The main motivation for switching from the maximal couplings of CbD
1.0 to multimaximal couplings is to be able to prove the following
theorem.
\begin{theorem}
\label{thm: subsystems binary}In a noncontextual c-c system of binary
random variables every subsystem (obtained from the system by removing
from it some of the random variables) is noncontextual.
\end{theorem}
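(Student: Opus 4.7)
The plan is to prove this by direct construction. Given a noncontextual c-c system $\mathcal{R}$ of binary random variables, Definition \ref{def: (non)contextual binary} supplies a multimaximally connected coupling $S$ of the bunches of $\mathcal{R}$. For any subsystem $\mathcal{R}'$ obtained by deleting some of the variables $R_{q}^{c}$, I would take $S'$ to be the marginal of $S$ on the indices that survive in $\mathcal{R}'$, and then argue that $S'$ witnesses noncontextuality of $\mathcal{R}'$.

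First I would check that $S'$ is a coupling of the bunches of $\mathcal{R}'$. Each bunch of $\mathcal{R}'$ is obtained from some bunch of $\mathcal{R}$ by dropping the deleted conteNts, and the corresponding row of $S'$ is, by construction, the marginal of the same row of $S$ on precisely those surviving conteNts. Since $S$ matches the bunches of $\mathcal{R}$ in joint distribution and marginalization is compatible with joint distribution, the row of $S'$ matches the corresponding bunch of $\mathcal{R}'$.

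The substantive step is the multimaximality of each surviving connection in $S'$. A surviving connection of $\mathcal{R}'$ is obtained from a connection $R_{q}^{1},\ldots,R_{q}^{k}$ of $\mathcal{R}$ by restricting to some subset of indices $i_{1}<\ldots<i_{m}$ with $m>1$ (singleton columns are not connections in the sense of Definition \ref{def: multimax} and need no check). The associated subcoupling $(S_{q}^{i_{1}},\ldots,S_{q}^{i_{m}})$ of $S'$ is then a subcoupling of the multimaximal coupling $(S_{q}^{1},\ldots,S_{q}^{k})$ of the original connection. For any further subset of $(S_{q}^{i_{1}},\ldots,S_{q}^{i_{m}})$, the probability that all its components coincide is already maximal among all couplings of the corresponding $R_{q}$'s, simply because that subset is also a subset of $(S_{q}^{1},\ldots,S_{q}^{k})$ and the latter satisfies the maximality requirement of Definition \ref{def: multimax}. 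Hence $(S_{q}^{i_{1}},\ldots,S_{q}^{i_{m}})$ is itself multimaximal, so $S'$ is multimaximally connected and $\mathcal{R}'$ is noncontextual.

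The main conceptual point, and the only nontrivial one, is the hereditary property: every subcoupling of a multimaximal coupling is itself multimaximal. This is immediate from the quantification over all subsets built into Definition \ref{def: multimax}, and it is precisely the feature that plain maximality lacks and that motivated the move from CbD 1.0 to CbD 2.0. Once this observation is in hand the theorem is essentially a tautology; notably, the argument uses neither binariness of the variables nor Theorems \ref{thm: multimaximal-coupling} or \ref{thm: pairs binary}, although those results are what guarantee, in the binary case, that the multimaximal couplings invoked here actually exist and are unique to begin with.
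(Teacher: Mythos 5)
Your proposal is correct and follows essentially the same route as the paper's own proof: marginalize the multimaximally connected coupling $S$ to the surviving variables, note the bunch distributions are preserved under marginalization, and invoke the hereditary property that any subcoupling of a multimaximal coupling is multimaximal, which is built into Definition \ref{def: multimax}. The only cosmetic difference is that the paper deletes one variable at a time while you delete all at once, and your closing observations (singleton columns need no check; binariness is not used in this argument) are accurate.
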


\begin{proof}
Let $S$ be a multimaximally connected coupling of a system $\mathfrak{\mathcal{R}}$.
Let $\mathfrak{\mathcal{R}}'$ be a system obtained by deleting a
random variable $R_{q}^{c}$ from $\mathcal{R}$; and let $S'$ be
the set of random variables obtained by deleting from $S$ the corresponding
random variable $S_{q}^{c}$. Then $S'$ is a multimaximally connected
coupling of $\mathcal{R}'$. Indeed, $S'$ is jointly distributed,
its subcouplings corresponding to the system's bunches have the same
distributions as these bunches (including the bunch for conteXt $c$),
and its subcouplings corresponding to the system's connections are
multimaximal couplings (including the connection for context $q$,
by the definition of a multimaximal coupling). \hfill$\square$
\end{proof}

There are other desirable properties of the revised definition of
contextuality.

First of all we should mention a property shared by CbD 1.0 and 2.0,
one that should hold for any reasonable definition of contextuality.
If a c-c system is consistently connected (i.e., $R_{q}^{c}\sim R_{q}^{c'}$
for all $q,c,c'$ such that $q$ is measured in both $c$ and $c'$),
then the system is (non)contextual if and only if it is (non)contextual
in the traditional sense (as interpreted in CbD): the multimaximal
couplings for connections consisting of identically distributed random
variables are identity couplings. 

Another property worth mentioning is that, using the terminology introduced
at the end of Section \ref{sec: Contextuality-by-Default-theory:},
whether a c-c system of binary random variables is contextual or noncontextual,
it is always completely contextual (respectively, completely noncontextual).
This follows from the fact that multimaximal couplings for connections
consisting of binary random variables are unique, whence if the combination
of these unique couplings is (in)compatible with the system's bunches
then it is \emph{all} combinations of the couplings that are (in)compatible
with the system's bunches.

A third property we find important follows from the fact that if a
connection contains just two random variables, then their maximal
coupling is their multimaximal coupling. As a result, the theory of
contextuality for cyclic c-c systems \cite{DKL2015FooP,KDL2015PRL,Klyachko,DKCZJ_2016}
remains unchanged. Recall that a cyclic c-c system of binary random
variables is one in which (1) any bunch consists of two random variables,
and (2) any connection consists of two random variables (and, without
loss of generality, the c-c system cannot be decomposed into two disjoint
cyclic c-c systems). The conteXt-conteNt matrix below shows a cyclic
system with 3 conteNts and 3 conteXts (their numbers in a cyclic system
are always the same, and called the \emph{rank} of the c-c system):

\begin{center}
\begin{tabular}{|c|c|c|c}
\cline{1-3} 
$R_{1}^{1}$$\begin{array}{cc}
\\
\\
\end{array}$ & $R_{2}^{1}$$\begin{array}{cc}
\\
\\
\end{array}$ & $\cdot$$\begin{array}{cc}
\\
\\
\end{array}$ & $c_{1}$\tabularnewline
\cline{1-3} 
$\cdot$$\begin{array}{cc}
\\
\\
\end{array}$ & $R_{2}^{2}$$\begin{array}{cc}
\\
\\
\end{array}$ & $R_{3}^{2}$$\begin{array}{cc}
\\
\\
\end{array}$ & $c_{2}$\tabularnewline
\cline{1-3} 
$R_{5}^{1}$$\begin{array}{cc}
\\
\\
\end{array}$ & $\cdot$$\begin{array}{cc}
\\
\\
\end{array}$ & $R_{3}^{3}$$\begin{array}{cc}
\\
\\
\end{array}$ & $c_{3}$\tabularnewline
\cline{1-3} 
\multicolumn{1}{c}{$q_{1}$} & \multicolumn{1}{c}{$q_{2}$} & \multicolumn{1}{c}{$q_{3}$} & $\boxed{\boxed{\mathcal{CYC}_{3}}}$\tabularnewline
\end{tabular}$\:$.
\par\end{center}

\noindent A prominent example of a noncyclic c-c system each of whose
connections consist of two binary random variables is one derived
from the Cabello-Estebaranz-Alcaine proof \cite{Cabello_PhysicsLettr1996}
of the Kochen-Specker theorem in 4D space: the system there consists
of 36 random variables arranged into 9 bunches (shown below by columns)
containing 4 random variables each, and 18 connections (shown by rows)
containing two random variables each:
\[
\left[\begin{array}{cccccccccc}
 & c_{1} & c_{2} & c_{3} & c_{4} & c_{5} & c_{6} & c_{7} & c_{8} & c_{9}\\
q_{0001} & \star & \star\\
q_{0010} & \star &  &  &  & \star\\
q_{1100} & \star &  & \star\\
q_{1200} & \star &  &  &  &  &  & \star\\
q_{0100} &  & \star &  &  & \star\\
q_{1010} &  & \star &  &  &  &  &  & \star\\
q_{1020} &  & \star &  & \star\\
q_{1212} &  &  & \star & \star\\
q_{1221} &  &  & \star &  &  & \star\\
q_{0011} &  &  & \star &  &  &  & \star\\
q_{1111} &  &  &  & \star &  & \star\\
q_{0102} &  &  &  & \star &  &  &  & \star\\
q_{1001} &  &  &  &  & \star &  &  &  & \star\\
q_{1002} &  &  &  &  & \star & \star\\
q_{0120} &  &  &  &  &  & \star &  &  & \star\\
q_{1121} &  &  &  &  &  &  & \star & \star\\
q_{1112} &  &  &  &  &  &  & \star &  & \star\\
q_{2111} &  &  &  &  &  &  &  & \star & \star
\end{array}\right]\:.
\]
 Here, the star symbol in the cell defined by conteXt $c_{i}$ and
conteNt $q_{j}$ designates a binary random variable $R_{j}^{i}$
(the quadruple index at $q$ represents a ray in a 4D real Hilbert
space, as labeled in Ref. \cite{Cabello_PhysicsLettr1996}). The contextual
analysis of such systems generalizes the 4D version of the Kochen-Specker
theorem in the same way (although computationally more demanding)
in which cyclic c-c systems of rank 3,4,5 generalize the treatment
of, respectively, the Suppes-Zanotti-Leggett-Garg \cite{11Leggett,SuppesZanotti1981},
EPR-Bohm-Bell \cite{Bell1964,9CHSH,15Fine}, and Klyachko-Can-Binicoglu-Shumovsky
systems \cite{Klyachko}. More general proofs of the Kochen-Specker
theorem (e.g., by Peres \cite{Peres1995}) translate into systems
with more than two binary random variables per connection. The multimaximal-couplings-based
analysis here will yield different results from the maximal-couplings-based
one.

\section{\label{sec: A-measure-of}A measure of contextuality}

In accordance with the linear consistency theorem proved in Ref. \cite{DK_CCsystems},
a c-c system of random variables always has a \emph{quasi-coupling}
that agrees with a given set of couplings imposed on its connections.
Let us clarify this. 

A \emph{quasi-random variable} $X$ is defined by assigning to its
possible values real numbers (not necessarily nonnegative) that sum
to 1. These numbers are called \emph{quasi-probabilit}y \emph{masses},
or simply \emph{quasi-probabilities}. For instance, a variable $X$
with values 1 and 2 to which we assign quasi-probabilities $\qPr\left[X=1\right]=-5$,
$\qPr\left[X=2\right]=6$ is a quasi-random variable. A quasi-random
variable is a proper random variable if and only if the quasi-probabilities
assigned to its values are nonnegative. If a quasi-random variable
$X$ is a vector, $\left(X_{1},\ldots,X_{n}\right)$, it can be referred
to as a vector of jointly distributed quasi-random variables, even
if each $X_{i}$ is a proper random variable. A vector of jointly
distributed quasi-random variables may very well have marginals (subvectors)
that are proper random vectors. 

A quasi-coupling of a c-c system $\mathcal{R}$ is a vector $S$ of
jointly distributed quasi-random variables in a one-to-one correspondence
with the elements of $\mathcal{R}$, such that every subcoupling of
$S$ that corresponds to a bunch of the system has a (proper) distribution
that coincides with that of the bunch. Finally, the quasi-coupling
$S$ agrees with a set of multimaximal couplings of the system's connections
if any subcoupling of $S$ that corresponds to a connection has the
same (proper) distribution as this connection's multimaximal coupling.

As an example, consider again our c-c system $\mathcal{R}_{ex}$:

\begin{center}
\begin{tabular}{|c|c|c|c|c}
\cline{1-4} 
$R_{1}^{1}$$\begin{array}{cc}
\\
\\
\end{array}$ & $R_{2}^{1}$$\begin{array}{cc}
\\
\\
\end{array}$ & $\cdot$$\begin{array}{cc}
\\
\\
\end{array}$ & $R_{4}^{1}$$\begin{array}{cc}
\\
\\
\end{array}$ & $c_{1}$\tabularnewline
\cline{1-4} 
$R_{1}^{2}$$\begin{array}{cc}
\\
\\
\end{array}$ & $\cdot$$\begin{array}{cc}
\\
\\
\end{array}$ & $R_{3}^{2}$$\begin{array}{cc}
\\
\\
\end{array}$ & $\cdot$$\begin{array}{cc}
\\
\\
\end{array}$ & $c_{2}$\tabularnewline
\cline{1-4} 
$R_{1}^{3}$$\begin{array}{cc}
\\
\\
\end{array}$ & $R_{2}^{3}$$\begin{array}{cc}
\\
\\
\end{array}$ & $R_{3}^{3}$$\begin{array}{cc}
\\
\\
\end{array}$ & $R_{4}^{3}$$\begin{array}{cc}
\\
\\
\end{array}$ & $c_{3}$\tabularnewline
\cline{1-4} 
\multicolumn{1}{c}{$q_{1}$} & \multicolumn{1}{c}{$q_{2}$} & \multicolumn{1}{c}{$q_{3}$} & \multicolumn{1}{c}{$q_{4}$} & $\boxed{\boxed{\mathcal{R}_{ex}}}$\tabularnewline
\end{tabular}.
\par\end{center}

\noindent Let all random variables be binary. Then, as we know, each
connection has a unique multimaximal coupling. Let us denote these
couplings (going from the leftmost column to the rightmost one in
the matrix) 
\[
\left(T_{1}^{1},T_{1}^{2},T_{1}^{3}\right),\left(T_{2}^{1},T_{2}^{3}\right),\left(T_{3}^{2},T_{3}^{3}\right),\left(T_{4}^{1},T_{4}^{3}\right).
\]
The theorem mentioned in the opening line of this section says that
one can always find a quasi-coupling $S$ for $\mathcal{R}_{ex}$,

\begin{center}
\begin{tabular}{|c|c|c|c|c}
\cline{1-4} 
$S_{1}^{1}$$\begin{array}{cc}
\\
\\
\end{array}$ & $S_{2}^{1}$$\begin{array}{cc}
\\
\\
\end{array}$ & $\begin{array}{cc}
\\
\\
\end{array}$ & $S_{4}^{1}$$\begin{array}{cc}
\\
\\
\end{array}$ & $c_{1}$\tabularnewline
\cline{1-4} 
$S_{1}^{2}$$\begin{array}{cc}
\\
\\
\end{array}$ & $\cdot$$\begin{array}{cc}
\\
\\
\end{array}$ & $S_{3}^{2}$$\begin{array}{cc}
\\
\\
\end{array}$ & $\cdot$$\begin{array}{cc}
\\
\\
\end{array}$ & $c_{2}$\tabularnewline
\cline{1-4} 
$S_{1}^{3}$$\begin{array}{cc}
\\
\\
\end{array}$ & $S_{2}^{3}$$\begin{array}{cc}
\\
\\
\end{array}$ & $S_{3}^{3}$$\begin{array}{cc}
\\
\\
\end{array}$ & $S_{4}^{3}$$\begin{array}{cc}
\\
\\
\end{array}$ & $c_{3}$\tabularnewline
\cline{1-4} 
\multicolumn{1}{c}{$q_{1}$} & \multicolumn{1}{c}{$q_{2}$} & \multicolumn{1}{c}{$q_{3}$} & \multicolumn{1}{c}{$q_{4}$} & $\boxed{\boxed{S_{ex}}}$\tabularnewline
\end{tabular},
\par\end{center}

\noindent such that
\[
\begin{array}{c}
\left(S_{1}^{1},S_{1}^{2},S_{1}^{3}\right)\sim\left(T_{1}^{1},T_{1}^{2},T_{1}^{3}\right),\\
\\
\left(S_{2}^{1},S_{2}^{3}\right)\sim\left(T_{2}^{1},T_{2}^{3}\right),\\
\\
\left(S_{3}^{2},S_{3}^{3}\right)\sim\left(T_{3}^{2},T_{3}^{3}\right),\\
\\
\left(S_{4}^{1},S_{4}^{3}\right)\sim\left(T_{4}^{1},T_{4}^{3}\right).
\end{array}
\]
Clearly, the system $\mathcal{R}_{ex}$ is noncontextual if and only
if among all such quasi-couplings $S_{ex}$ there is at least one
proper coupling. 

It is convenient for our purposes to look at this in the following
way (introduced in Ref. \cite{DK_CCsystems} but derived from an idea
proposed in Ref. \cite{DeBarrosOas2014}). For each quasi-coupling
$S_{ex}$ one can compute its \emph{total variation}. The latter is
defined as the sum of the absolute values of all quasi-probabilities
assigned to the values of $S_{ex}$ (i.e., to all $2^{9}$ combinations
of values of $S_{1}^{1},S_{2}^{1},\ldots,S_{4}^{3}$). If $S_{ex}$
is a proper coupling, this total variation equals 1, otherwise it
is greater than 1. Therefore, if the system $\mathcal{R}_{ex}$ is
contextual, then the total variation of its quasi-couplings is always
greater than 1. As shown in Ref. \cite{DK_CCsystems}, one can always
find a quasi-coupling $S_{ex}^{*}$ of $\mathcal{R}_{ex}$ that has
the smallest possible value of the total variation. This value (perhaps,
less 1, if one wants zero rather than 1 to be the smallest value)
can be taken to be a \emph{measure of contextuality}. 

Generalizing, we have the following statement.
\begin{theorem}
Any c-c system of binary random variables has a quasi-coupling whose
subcouplings corresponding to the system's connections are their multimaximal
couplings. Among all such quasi-couplings there is at least one with
the smallest possible value of total variation (which value is then
considered a measure of contextuality for the system).
\end{theorem}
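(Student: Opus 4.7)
The plan is to split the statement into two parts and address each in turn. The first part asserts existence of a quasi-coupling whose connection-marginals equal the multimaximal couplings supplied by Corollary~\ref{cor: A-multimaximal-coupling}; the second part asserts that the infimum of the total variation over this non-empty family is attained.

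For existence, I would invoke the linear consistency theorem of Ref.~\cite{DK_CCsystems}, which is already referenced in the opening line of this section. That theorem produces a quasi-coupling of a c-c system agreeing with any prescribed family of (proper) couplings on its connections, provided these prescribed couplings are marginally consistent with the bunches in the only way they can be: for each random variable $R_{q}^{c}$, the $1$-marginal of the bunch containing it must coincide with the $1$-marginal of the connection coupling containing it. In our setting the prescribed connection couplings are the multimaximal couplings $(T_{q}^{1},\ldots,T_{q}^{k})$ given by Theorem~\ref{thm: multimaximal-coupling}, and each $T_{q}^{c}$ is, by construction, distributed identically to $R_{q}^{c}$. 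The consistency hypothesis is therefore automatic, and one obtains a non-empty set $\mathcal{A}$ of admissible quasi-couplings.

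For the minimization, I would work in the finite-dimensional real vector space $V$ of quasi-probability mass functions on the joint value set $\{1,2\}^{N}$, where $N$ is the total number of random variables in the system. The set $\mathcal{A}$ is cut out of $V$ by finitely many linear equations (the bunch-marginal and connection-marginal constraints, together with the total-mass-one normalization), so $\mathcal{A}$ is a closed affine subspace of $V$, non-empty by the previous step. The total variation $\mathrm{TV}(S)=\sum_{s}|\qPr[S=s]|$ is a continuous convex function on $V$, and its sublevel sets $\{S\in V:\mathrm{TV}(S)\leq M\}$ are compact, being bounded in the $\ell^{1}$-norm and hence in every norm on a finite-dimensional space. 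Picking any $S_{0}\in\mathcal{A}$ and setting $M_{0}=\mathrm{TV}(S_{0})$, the intersection of $\mathcal{A}$ with $\{\mathrm{TV}\leq M_{0}\}$ is a non-empty compact subset of $\mathcal{A}$ on which $\mathrm{TV}$ attains its minimum; this is then the global minimum of $\mathrm{TV}$ over $\mathcal{A}$.

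The principal obstacle is really the existence part, as it is where the work of Ref.~\cite{DK_CCsystems} is packaged; once $\mathcal{A}$ is known to be a non-empty affine set, the minimization step is a routine compactness argument, hinging only on the coercivity of the $\ell^{1}$-norm on a finite-dimensional space. If one wished to avoid black-boxing the linear consistency theorem, the alternative would be to write down $\mathcal{A}$ explicitly as the solution set of a linear system indexed by $\{1,2\}^{N}$ and verify solvability directly; this is in principle straightforward but notationally heavier, and adds nothing beyond what the cited theorem already delivers.
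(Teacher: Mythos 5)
Your proof is correct and follows essentially the same route as the paper: existence of the required quasi-coupling is obtained by invoking the linear consistency theorem of Ref.~\cite{DK_CCsystems} (the marginal-consistency hypothesis being automatic because the multimaximal couplings of Corollary~\ref{cor: A-multimaximal-coupling} are couplings of the connections), and the attainment of the smallest total variation is then routine. The only difference is that you spell out the finite-dimensional compactness/coercivity argument for the minimizer explicitly, whereas the paper simply delegates this step to the same reference.
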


\section{\label{sec: Conclusion:-How-to}Conclusion: How to generalize}

For c-c systems with binary random variables multimaximal couplings
are definitely a better way of generalizing identity couplings of
the traditional contextuality analysis than maximal couplings. A system
that is deemed noncontextual in terms of multimaximal couplings has
noncontextual subsystems. The contextuality of a contextual system
and noncontextuality of a noncontextual system are both complete if
one uses multimaximal couplings to define them. And the theory specializes
to the previous version (CbD 1.0) when applied to cyclic systems and
to other systems whose connections consist of pairs of random variables.

The question to pose now is what one should do with non-binary random
variables. The most straightforward way to construct a general theory
is to simply drop the qualification ``binary'' in Definition \ref{def: (non)contextual binary}.
There are, however, some complications associated with this approach.
Connections involving non-binary variables may not have multimaximal
couplings (Section \ref{sec: Multimaximal-couplings-for}) One has
to decide whether such systems are contextual, and how to measure
the degree of contextuality in them if they are. Another complication,
shared with the CbD 1.0, is that multimaximal couplings are not unique
if the random variables are not all binary, because of which one no
longer can ignore the difference between complete and partial forms
of (non)contextuality. Conceptual and computational adjustments have
to be made. 

At the same time, some of the properties mentioned in Section \ref{sec: Properties-of-contextuality}
hold for arbitrary random variables, at least for categorical ones
(those with finite number of values). Theorem \ref{thm: subsystems binary}
obviously holds for arbitrary random variables if noncontextuality
is taken to be partial. The definition of the (non)contextuality of
a system of random variables reduces to the traditional one when a
system is consistently connected. Theorem \ref{thm: pairs binary}
also generalizes to arbitrary random variables, although in a somewhat
weaker form due to the loss of the linear ordering of the distributions
within a connection.
\begin{theorem}
Let $R_{q}^{1},\ldots,R_{q}^{k}$ be a connection. Then $\left(T_{q}^{1},\ldots,T_{q}^{k}\right)$
is a multimaximal coupling of $R_{q}^{1},\ldots,R_{q}^{k}$ if and
only if $\left(T_{q}^{c},T_{q}^{c'}\right)$ is a maximal coupling
of $\left\{ R_{q}^{c},R_{q}^{c'}\right\} $ for all $c<c'$ in $\left\{ 1,\ldots,k\right\} $.
\end{theorem}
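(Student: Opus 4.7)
The ``only if'' direction is immediate: by Definition \ref{def: multimax}, a multimaximal coupling restricts to a maximal coupling on every subset of size $>1$, and pairs are subsets of size $2$.

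For the ``if'' direction, I would invoke the standard characterization of maximal couplings (cf.\ Theorem 3.3 of \cite{DK_CCsystems}): for any finite collection of distributions $P_{i_1},\ldots,P_{i_m}$ on a common value set, the maximum of $\Pr[X_{i_1}=\cdots=X_{i_m}]$ over all couplings equals $\sum_{x}\min_{j}P_{i_j}(x)$, and a coupling attains this supremum if and only if $\Pr[X_{i_1}=\cdots=X_{i_m}=x]=\min_{j}P_{i_j}(x)$ for every value $x$. The hypothesis of the theorem is therefore equivalent to the pointwise identity $\Pr[T_q^{c}=T_q^{c'}=x]=\min(P_c(x),P_{c'}(x))$ holding for every pair $c<c'$ and every value $x$.

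The heart of the argument is then a chaining step. Fix a value $x$ and a subset $\{i_1,\ldots,i_m\}$, and set $A_j=\{T_q^{i_j}=x\}$. Pairwise maximality gives $\Pr(A_i\cap A_j)=\min(\Pr(A_i),\Pr(A_j))$ for all $i,j$. Since $\Pr(A)\le\Pr(B)$ together with $\Pr(A\cap B)=\Pr(A)$ forces $\Pr(A\setminus B)=0$, the family $\{A_1,\ldots,A_m\}$ is totally ordered by inclusion modulo null sets. Relabelling so that $\Pr(A_{j_1})\le\cdots\le\Pr(A_{j_m})$, we obtain $A_{j_1}\subseteq\cdots\subseteq A_{j_m}$ up to null sets, and hence $\Pr(A_1\cap\cdots\cap A_m)=\Pr(A_{j_1})=\min_{j}P_{i_j}(x)$. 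Summing this equality over $x$ yields $\Pr[T_q^{i_1}=\cdots=T_q^{i_m}]=\sum_{x}\min_{j}P_{i_j}(x)$, which is precisely the maximal possible value. Thus $(T_q^{i_1},\ldots,T_q^{i_m})$ is a maximal coupling of $(R_q^{i_1},\ldots,R_q^{i_m})$ for every subset, i.e., $(T_q^{1},\ldots,T_q^{k})$ is multimaximal.

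The chaining step is the only real content of the proof; everything else is bookkeeping. The expected obstacle, if one wishes to push the statement beyond categorical random variables, is the measure-theoretic analogue of ``$\Pr(A\cap B)=\min(\Pr(A),\Pr(B))$ implies nesting,'' which must be phrased via Radon--Nikodym densities and absolutely continuous parts of the distributions. For the categorical setting envisaged in Section \ref{sec: Conclusion:-How-to}, however, the argument above is complete as sketched.
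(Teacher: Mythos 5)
Your proof is correct, but it takes a genuinely different route from the paper's. The paper argues the ``if'' direction by contraposition: assuming the coupling is pairwise maximal but not maximal on some subset, it extracts (via the characterization of maximal couplings in \cite{Thor}, Theorem 3.3 of \cite{DK_CCsystems}) a single offending value $v$ with $\Pr\left[T_{q}^{1}=\ldots=T_{q}^{k}=v\right]<\min_{c}\Pr\left[T_{q}^{c}=v\right]$ while all pairwise probabilities at $v$ attain their minima, then dichotomizes each $T_{q}^{c}$ into $\{T_{q}^{c}=v\}$ versus its complement and derives a contradiction with the binary-variable result, Theorem \ref{thm: pairs binary} (itself resting on Theorem \ref{thm: multimaximal-coupling}). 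You instead give a direct argument: the same characterization of maximal couplings, applied pointwise, turns pairwise maximality into $\Pr\left[T_{q}^{c}=T_{q}^{c'}=x\right]=\min\left(P_{c}(x),P_{c'}(x)\right)$ for every value $x$, and your chaining lemma (that $\Pr(A\cap B)=\min\left(\Pr(A),\Pr(B)\right)$ forces nesting modulo null sets) shows the level sets $\{T_{q}^{i_{j}}=x\}$ are linearly ordered, whence the $m$-fold intersection already has probability $\min_{j}P_{i_{j}}(x)$; summing over $x$ gives maximality on every subset. Your route is self-contained and constructive: it never invokes the binary theorems, and indeed it reproves Theorem \ref{thm: pairs binary} as a special case, making transparent the structural reason why pairwise maximality suffices (nesting of the events $\{T_{q}^{c}=x\}$ mirrors the ordered support $(\dagger)$ of Theorem \ref{thm: multimaximal-coupling}). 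What the paper's route buys is brevity given machinery already on hand, plus an illustration of the dichotomization technique it advocates in Section \ref{sec: Conclusion:-How-to}. Your closing caveat is also apt: both arguments as written are for categorical variables, and extending yours beyond that setting would require recasting the pointwise identities and the nesting step in density/measure-theoretic terms.
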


\begin{proof}
The ``only if'' part is true because pairs are subsets. To prove
the ``if'' part, assume the contrary: there is a subset of the connection
(without loss of generality, the connection itself, $R_{q}^{1},\ldots,R_{q}^{k}$)
such that its coupling $\left(T_{q}^{1},\ldots,T_{q}^{k}\right)$
is not maximal while $\Pr\left[T_{q}^{c}=T_{q}^{c'}\right]$ is maximal
possible for all $c,c'$. Then, by the theorem on maximal couplings
(see Ref. \cite{Thor} or Ref. \cite{DK_CCsystems}, Theorem 3.3)
there is a value $v$ in the common set of values for all random variables
$T_{q}^{c}$ such that 
\[
\Pr\left[T_{q}^{1}=T_{q}^{2}=\ldots=T_{q}^{k}=v\right]<\min_{c\in\left\{ 1,\ldots,n\right\} }\left(\Pr\left[T_{q}^{c}=v\right]\right),
\]
while, for any $c,c'\in\left\{ 1,\ldots,k\right\} $, 
\[
\Pr\left[T_{q}^{c}=T_{q}^{c'}=v\right]=\min\left(\Pr\left[T_{q}^{c}=v\right],\Pr\left[T_{q}^{c'}=v\right]\right).
\]
Then, by replacing each $T_{q}^{c}$ with 
\[
\widetilde{T}_{q}^{c}=\left\{ \begin{array}{ccc}
1 & if & T_{q}^{c}=v\\
2 & if & otherwise
\end{array}\right.,
\]
and considering $\left(\widetilde{T}_{q}^{1},\ldots,\widetilde{T}_{q}^{k}\right)$
a coupling for some connection consisting of binary random variables,
we come to a contradiction with Theorem \ref{thm: pairs binary}.
\hfill$\square$
\end{proof}

There is a complication, however, that seems especially serious for
simply dropping the qualification ``binary'' in Definition \ref{def: (non)contextual binary}:
this approach allows a noncontextual system of random variables to
become contextual under \emph{coarse-graining}. The latter means lumping
together some of the values of the variables constituting some of
the connections. Thus, if $R_{q}^{c}$ has values $1,2,3,4$, one
could lump together 1 and 2 and obtain a random variables with three
values (and do the same for all other random variables in the connection
for conteNt $q$). It is natural to expect that a system should preserve
its noncontextuality under such course-graining, but this is not the
case generally.
\begin{example}
The system consisting of the single connection with six values ($1,1',2,2',3,3'$)
in Example \ref{exa: 2} is noncontextual, because it does have multimaximal
couplings. However, if one lumps together $i$ and $i'$ and denotes
the lumped value $i$ ($=1,2,3$), one obtains the system considered
in Example \ref{exa: 1}, which is contextual because it does not
have a multimaximal coupling. \hfill$\square$
\end{example}

A radical solution for all the problems mentioned is to deal with
binary random variables only. This can be achieved by replacing each
non-binary random variable $R_{q}^{c}$ in a system with a bunch of
jointly distributed dichotomizations thereof (that thereby becomes
a sub-bunch of the bunch representing conteXt $c$). For instance,
if $R_{q}^{c}$ has values $1,2,3,4$, then it could be represented
by $2^{4-1}-1=7$ jointly distributed binary random variables. The
joint distribution is very simple: of the $2^{7}$ values of this
bunch all but 4 have zero probability masses. Of course, every other
random variable with conteNt $q$ should be dichotomized in the same
way, replacing thereby the corresponding connection with 7 new connections.
Coarse-graining in this approach becomes a special case of extracting
from a system a subsystem. The price one pays for the conceptual simplicity
thus achieved is a great increase of the numbers of random variables
in each bunch (becoming infinite if the original system involves non-categorical
random variables), although the cardinality of the supports of the
bunches remains unchanged. It is to be seen if this dichotomization
approach proves feasible.

\subsubsection*{Acknowledgments.}

This research has been supported by NSF grant SES-1155956 and AFOSR
grant FA9550-14-1-0318.  We are grateful to Victor H. Cervantes for his critical comments on the manuscript.

\end{document}